\newtheorem{theorem}{Theorem}[section]
\newtheorem{lemma}[theorem]{Lemma}
\def\ZZ{\mathbb{Z}}
\author{Yasir Ali\affiliationmark{1}%\thanks{I am fully supported.}
  \and Asma Javaid\affiliationmark{2}%\thanks{And he is, too!}
 % \and Some Dummy\affiliationmark{3}
}
\title[Stability with Strictly Increasing Valuations]{Pairwise Stability in Two Sided Market
 \\
with Strictly Increasing Valuation Functions}
\affiliation{
  % one line per affiliation, no postal codes, grant numbers or similar
College of Electrical and Mechanical Engineering, National
University of Sciences and Technology, Rawalpindi 46070,
Pakistan\\
  School of Natural Sciences, National University
of Sciences and Technology, H-12, Islamabad, Pakistan}
\keywords{Stable matching, marriage model, indivisible goods,
increasing valuations}
\begin{document}
\publicationdetails{19}{2017}{1}{10}{1522}
\maketitle
\begin{abstract}
  This paper deals with two-sided matching market with two disjoint
sets, i.e. the set of buyers and the set of sellers. Each seller can
trade with at most with one buyer and vice versa. Money is
transferred from sellers to buyers for an indivisible goods that
buyers own. Valuation functions, for participants of both sides, are
represented by strictly increasing functions with money considered
as discrete variable. An algorithm is devised to prove the existence
of stability for this model.
\end{abstract}
%%%%%%%%%%%%%%%%%%%%%%%%%%%%%%%%%%%%%%%%%%%%%%%%%%%%%%%%%%%%%%%%%%%
\section{Introduction}
Over the last few decades, numerous scholars have carried out
research pertaining to two-sided matching problem. In a two sided
matching problem, the set of participants are divided into two
disjoint sets, say $U$ and $V$. Each participant ranks a participant
of other set in order of preferences. Main objective of two-sided
matching problem is formation of partnership between the
participants of $U$ and $V$. A matching $X$, is  one-to-one
correspondence between the participant of one set to the participant
of other set. Main requirement in a two-sided matching problems is
that of stability of matchings. A matching is stable if all
participants have acceptable partners and there does not exist a
pair that is not matched but prefers each other to their current
partners.

The concept of finding two-sided stable matching was first given by
Gale and Shapley \cite{GS} in their paper \textit{``College
Admissions and the Stability of Marriage''}. In the course of
presenting an algorithm for matching applicants to college places,
they introduced and solved the stable marriage problem. This problem
deals with two disjoint sets of participants $U$ and $V$. Each
participant of these sets submits a preference list ranking a subset
of other set of participants in order of preference. The aim is to
form a one-to-one matching $X$ of the participants such that no two
participants would prefer each other to their partner in $X$. The
authors used their solution to this problem as a basis for solving
the extended problem where one of the sets consists of college
applicants, and the other consists of colleges, each of which has a
quota of places to fill. An important feature of their model is that
no negotiations are allowed among the participant of both sets. This
shows that participants in their model are rigid. Many additional
variants of the stable marriage problem have been discussed in the
literature.  Gusfield and  Irving \cite{GuIr} published a book that
covers many variants of original stable marriage problem such as the
preferences of agents may include ties, incomplete preferences,
weighted edges as well as non-bipartite versions such as roommate
problem.

Shapley and Shubik \cite{SS} presented the one-to-one buyer seller
model known as \textit{``assignment game''}. In their model,
participants are flexible because monetary transfer is permitted
among participants of both sets. Each participant on one side can
supply exactly one unit of some indivisible good and exchange it for
money with a participant from the other side whose demand is also
one unit. Shapley and Shubik \cite{SS} showed that the core of the
game is a non-empty complete lattice, where the core is defined as
the set of un-denominated outcomes. The core in their model is a
solution set based upon a linear programming formulation of the
model \cite{SS}.

After this, two-sided matchings have been studied extensively.
Different approaches have been made by many researchers in which
they generalize the marriage model of  Gale and Shapley \cite{GS}
and assignment game of Shapley and Shubik \cite{SS}. Main aim of
these researchers was to find common result for both of \cite{GS}
and \cite{SS} models in a more general way. Eriksson and Karlander
\cite{EriKar} and Sotomayor \cite{Sot2000} presented the hybrid
models. These models are the generalization of the discrete marriage
model \cite{GS} and continuous assignment game \cite{SS}. Existence
of stable outcome and the core is discussed in \cite{EriKar,
Sot2000}. Farooq \cite{Far2008} presented a one-to-one matching
model in which he identified the preferences of participants by
strictly increasing linear functions. He proposed an algorithm to
show the existence of pairwise stable outcome in his model by taking
money as a continuous variable. His model includes the marriage
model of Gale and Shapely \cite{GS}, assignment game of Shapely and
Shubik \cite{SS} and Erikson and Karlander \cite{EriKar} hybrid
model as special cases. The motivation of our work from the stable
matching literature is the model of Ali and Farooq \cite{ya2010}.
Ali and Farooq \cite{ya2010} presented a one-to-one matching model
by taking money as a discrete variable in linear increasing
function. They designed an algorithm to show that pairwise stable
outcome always exists. The complexity of Ali and Farooq's
\cite{ya2010} algorithm depends on the size of those intervals where
prices fall. Our model is the generalized form of Ali and Farooq
\cite{ya2010} model. We consider the preferences of participants by
general increasing function and designed an algorithm to find a
pairwise stable outcome in our model.

This paper is organized as follows. Section 2 describes of our model
briefly. Section 3 gives the Sequential Mechanisms for buyer and
seller. Section 4 describes the supple and demand characterization
of stable matching. We devise an algorithm which finds a stable
outcome in our model in Section 5. In Section 6, we discuss the main
result of our model.
\section{The Model Description}
The matching market under consideration consists of two types of
participants one type of participants are sellers and second type of
participants are buyers. Here $U$ and $V$ denote the sets of sellers
and buyers, respectively. Throughout in this paper, we model
matching markets as trading platforms where buyers and sellers
interact. Moreover, each buyer as well as seller can trade with at
most one participant on the other side of the market at a particular
time.  The negotiation and side payments between participants of
both sides are allowed. Naturally, each participant wants to gain as
much profit as possible from his/her partner. Let $E=U \times V$
denotes the set of all possible pairs of seller-buyer. Also when
buyer and seller interact with each other in auction market they
have some upper and lower bounds of prices. We express these bounds
by vector $\underline{\pi}, \overline{\pi} \in \mathbb{Z}^{E}$ where
always $\underline{\pi}_{ij}\leq \overline{\pi}_{ij}$ for each
$(i,j)\in E$\footnote{The notation $\mathbb{Z}$ stand for set of
integers and notation $\mathbb{R}$ stand for set of real numbers.
The notation $\mathbb{Z}^E$ stands for integer lattice whose  points
are indexed by $E$.}. The price vector is denoted by $p$ and define
as $p=( {p}_{ij} \in \mathbb{Z} | (i,j) \in E)$. The price vector is
said to be feasible price vector if it satisfies
$\underline{\pi}\leq p\leq \overline{\pi}$ \footnote{For any two
vectors $x\in \mathbb{Z}^E$ and $y\in \mathbb{Z}^E$, we say that $x
\leq y$ if $x_{ij} \leq y_{ij}$ for all $(i,j)\in E$.}.

Since each participant has preferences over the participants of the
other set, so the preferences of sellers over buyers and buyers over
sellers is given by the valuation function $f_{ij}(x)$ and
$f_{ji}(-x)$ for each $(i,j)\in E$. Here $f_{ij}(x)$ denotes the
valuation of seller $i \in U$, when he or she trade with buyer $j
\in V$, and get an amount of money $x$ from buyer $j \in V$.
Similarly, $f_{ji}(-x)$ represents the valuation of buyer $j \in V$,
when he or she trade with seller $i \in U$, and pays an amount of
money $x$. These valuation functions are strictly increasing
functions of money \footnote{By strictly increasing function we mean
that for $x > y$ implies $f(x)> f(y)$.}.
\section{The Buyer Seller Sequential Mechanism}
\label{sec- The Buyer-Seller Sequential Mechanism}%
Since $f_{ij}(x)$ and $f_{ji}(-x)$ denote the preferences of
participants so if $f_{ij}(x)\geq 0$, then we say that seller $i$ is
ready to make a partnership with buyer $j$ if $j$ pays $i$ an amount
$x$ of money. Or we can say that buyer $j$ is acceptable to seller
$i$ an amount $x$ of money. Also if $f_{ji}(-x)\geq 0$, then we say
that buyer $j$ is ready to make a partnership with seller $i$ an
amount $x$ of money. If $f_{i_0j_0}(x_1)>f_{i_0j_1}(x_1)$, then we
can say that seller $i_0$ \textit{prefers} buyer $j_0$ to buyer
$j_1$ at money $x_1$ where $i_0 \in U $ and $j_0,j_1\in V$ and
$x_1\in \mathbb{Z}$. If $f_{j_0i_0}(-x_1)>f_{j_0i_1}(-x_1)$, then we
can say that $j_0$ \textit{prefers} $i_0$ to $i_1$ at money $x_1$
where $i_0, i_1 \in U $ and $j_0 \in V$ and $x_1\in \mathbb{Z}$. If
$f_{i_0j_0}(x_1)= f_{i_0j_1}(x_1)$, then seller $i_0$ is
\textit{indifferent} between $j_0$ and $j_1$ at money $x_1$. Also,
if $f_{j_0i_0}(-x_1) = f_{j_0i_1}(-x_1)$, then buyer $j_0$ is said
to be \textit{indifferent} between $i_0$ and $i_1$ at money $x_1$.
If $f_{ij}(x) = 0$, then seller $i$ is indifferent between the buyer
$j$ and himself at $x$. If an individual is not indifferent between
any two participants then the preferences of such individual are
called \textit{strict preferences}. In our model, preferences of the
participants are not strict because these are based on monetary
transfer and therefore, different functions may have same value for
two distinct values of money. If $f_{ji}(-x) = 0$ for some $x \in
\mathbb{Z}$, then buyer $j$ is indifferent between the seller $i$
and himself at $x$. Preferences of participants are not strict in
our model because the monetary transfer is allowed between
participants of both sets.
\section{The Supply and Demand Characterization of Stable Matchings}%The Mathematical Model}
\label{sec-The Supply and Demand Characterization of Stable
Matchings}%
This section describes the characteristic of an outcome for which it
would be stable. A subset $X$, of a set $E$, is called matching if
every agent appear at most once in $X$. A matching $X$ is said to be
pairwise stable if it is individually rational and is not blocked by
any buyer-seller pair. A 4-tuple $(X ; p, q, r)$ of a matching $X$
and a feasible price vector $p$ is said to be a
\textit{pairwise-stable outcome} if the following two conditions are
satisfied:
\begin{description}
\item[(p1)] $q \geq \bf{0}$ and
$r\geq \bf{0}$,
\item[(p2)] $f_{ij}(c) \leq q_i$ or
$f_{ji}(-c) \leq r_j $ for all $c \in [\underline{\pi}_{ij},
\overline{\pi}_{ij} ]_{\mathbb{Z}}$ and for all $(i,j)\in
E$\footnote{we define $[x,y]_{\mathbb{Z}} = \{a\in \mathbb{Z} \mid\
x\leq a \leq y\}$.},
\end {description}
where $(q, r)\in \mathbb{R}^{U}\times\mathbb{R}^{V}$ is defined by
\begin{eqnarray}
q_i & = & \left \{
\begin{array}{ll}f_{ij}(p_{ij}) & \mbox{if } (i,j)\in X \mbox{ for some } j\in V \\
0                & \mbox{otherwise}
\end{array}
\right.
\begin{array}{l}
\quad (i\in U),
\end{array}\label{payoff1}\\
r_j &=& \left \{
\begin{array}{ll}
f_{ji}(-p_{ij})  & \mbox{if } (i,j)\in X \mbox{ for some } i\in U\\
0                  & \mbox{otherwise}
\end{array}
\right.
\begin{array}{l}
\quad (j\in V).
\end{array}\label{payoff2}
\end{eqnarray}

Condition (p1) says that the matching $X$ is individually rational.
Condition (p2) means $(X; p, q, r)$ is not blocked by any
buyer-seller pair. A matching $X$ is said to be
\textit{pairwise-stable} if $(X; p, q, r)$ is pairwise-stable.

To show the existence of pairwise-stable outcome in the model
defined in Section 3, we first need to calculate price vector $p$
for each buyer-seller pairs. Since prices should be feasible and
$p_{ij} \in \mathbb{Z}$ for each  $(i,j)\in
E$\footnote{$\lfloor{x}\rfloor= \sup\{n\in \ZZ\mid x\geq n\}$.}, so
initially we define it by

\begin{equation}
\label{inip} p_{ij} = \left\{ \begin{array}{ll}
\overline {\pi}_{ij} & \mbox{if }  f_{ji}(-\overline {\pi}_{ij}) \geq 0\\
\max\left\{\underline {\pi}_{ij},
\left\lfloor{-f^{-1}_{ji}(0)}\right\rfloor \right. &
    \mbox{otherwise}.
 \end{array}
\right.
\end{equation}
Before describing the algorithm mathematically, we define few
subsets of set $E$ that help us to find a matching $X$ satisfying
condition (p1). Firstly, we define the subset $K_{0}$ and $T_{0}$ of
set $E$, that contain those buyer-seller pairs from the set $E$ that
are not mutually acceptable, as:
\begin{equation}\label{K0}
K_0 = \{ (i,j)\in E \mid f_{ji}(-p_{ij})< 0\},
\end{equation}
\begin{equation}\label{T0}
T_0 = \{ (i,j)\in E \mid f_{ij}(p_{ij})< 0\}.
\end{equation}
$K_0$ is the set of all those pairs where buyer is not ready to
trade with seller and $T_0$ is the set of all those pairs where
seller is not ready to trade with buyer. Now the set of mutually
acceptable buyer-seller pairs is defined as:
\begin{equation}\label{tile}
\widetilde E = E\setminus\{K_0 \cup T_0\}.
\end{equation}
Define $\tilde q_i$ for each $i\in U$, and $\widetilde E_P$ by
(\ref{til-q}) and (\ref{til-EP})
\begin{equation}\label{til-q}
\tilde q_i = \max \{f_{ij}(p_{ij})\mid (i,j)\in \widetilde E \}
\end{equation}
and
\begin{equation}\label{til-EP}
\widetilde E_P = \{(i,j)\in \widetilde E\mid f_{ij}(p_{ij})= \tilde
q_i\}.
\end{equation}
The maximum over an empty set is taken to be zero by definition.
Here the set $\widetilde E_P$ contains those buyer-seller pairs
which are mutually acceptable and the buyer is most preferred for
seller out of all acceptable buyers. We define a subset
$\widehat{E}_P$ of $\widetilde{E}_P$ by:
\begin{equation}\label{hatEP}
\widehat{E}_P = \{ (i,j) \in \widetilde{E}_P \mid
f_{ji}(-p_{ij})\geq r_j\}.
\end{equation}
Initially, since $r= \textbf{0}$, $\widehat{E}_P$ will coincide with
$\widetilde{E}_P$. However, in the further iterations of the
algorithm $\widehat{E}_P$ may be a proper subset of
$\widetilde{E}_P$.

Since we have no matching $X$ at the start of the algorithm, so
consider $\widetilde {V} = \emptyset$, where $\widetilde {V}$
denotes the set of matched buyers in $X$, that is,

\begin{equation}\label{til-V}
\widetilde V = \{j \in V\mid j \mbox{ is matched in } X \}.
\end{equation}
If $\widetilde {V} = \emptyset$, then there is no matched buyer in
matching $X$. At each step in the algorithm, the matching $X$ in the
bipartite graph $(U,V;\widehat{E}_P)$ must satisfies the following
conditions:
\begin{description}
\item[(s1)] $X$ matches all members of $\widetilde{V}$,
\item[(s2)]
$X$ maximizes $\sum\limits_{(i,j)\in X}f_{ji}(-p_{ij})$ among the
matchings that satisfy (s1).
\end{description}
Up to this point the outcome $(X; p, q, r)$ obviously satisfies the
condition (p1). To satisfy the condition (p2), we define the set $K$
of all those buyer-seller pairs that are mutually acceptable and the
buyer is most preferred to seller but the seller is unmatched in $X$
by
\begin{equation}\label{K}
K =\{(i,j)\in \widetilde E_P\mid i \mbox{ is  unmatched in } X \}.
\end{equation}
%Since the set $\widetilde E_P$ contains such pairs that are mutually
%acceptable and the buyer is most preferred for seller out of all
%acceptable buyers. So there may exists a case in $\widetilde E_P$
%when seller $i$ is indifferent between two distinct buyers. Thus it
%is totally based on the seller $i$ to which particular buyer
%he wants to trade and the rest of unmatched pairs will go in set $K$.\\
\begin{lemma}
If $K = \emptyset$, then matching $X$ is stable.
\end{lemma}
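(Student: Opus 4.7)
The plan is to verify the two defining conditions (p1) and (p2) of pairwise stability for the outcome $(X;p,q,r)$ under the hypothesis $K=\emptyset$. Condition (p1) is essentially an invariant of the construction: the matching $X$ is drawn from $\widehat{E}_P\subseteq \widetilde{E}_P\subseteq \widetilde{E}=E\setminus(K_0\cup T_0)$, so every matched pair $(i,j)\in X$ automatically satisfies $f_{ij}(p_{ij})\geq 0$ and $f_{ji}(-p_{ij})\geq 0$, while unmatched agents pick up payoff $0$; hence $q\geq\mathbf{0}$ and $r\geq\mathbf{0}$.

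For (p2) I would fix an arbitrary $(i,j)\in E$ and an arbitrary $c\in[\underline{\pi}_{ij},\overline{\pi}_{ij}]_{\mathbb{Z}}$ and split into three cases. In Case 1, $(i,j)\in K_0$; the rule (\ref{inip}) forces $p_{ij}=\underline{\pi}_{ij}$, so every feasible $c$ satisfies $c\geq p_{ij}$, and strict monotonicity of $f_{ji}$ gives $f_{ji}(-c)\leq f_{ji}(-p_{ij})<0\leq r_j$. In Case 2, $(i,j)\notin K_0$ and $c>p_{ij}$; by (\ref{inip}), either $p_{ij}=\overline{\pi}_{ij}$ (vacuous) or $p_{ij}=\lfloor -f^{-1}_{ji}(0)\rfloor$, so $c\geq p_{ij}+1>-f^{-1}_{ji}(0)$, giving $f_{ji}(-c)<0\leq r_j$.

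Case 3 is the substantive one: $(i,j)\notin K_0$ and $c\leq p_{ij}$. If $(i,j)\in T_0$ then $f_{ij}(c)\leq f_{ij}(p_{ij})<0\leq q_i$ and we are done. Otherwise $(i,j)\in\widetilde{E}$, and this is where the hypothesis $K=\emptyset$ enters decisively: if $i$ were unmatched, then $K=\emptyset$ would force $\widetilde{E}_P$---and consequently $\widetilde{E}$, since $\tilde q_i$ is attained whenever $\widetilde{E}$ contains any pair with first coordinate $i$---to contain no pair with first coordinate $i$, contradicting $(i,j)\in\widetilde{E}$. Hence $i$ is matched to some $j^{*}$ with $(i,j^{*})\in X\subseteq\widetilde{E}_P$, whence $q_i=f_{ij^{*}}(p_{ij^{*}})=\tilde{q}_i\geq f_{ij}(p_{ij})\geq f_{ij}(c)$, closing the seller side of (p2).

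The step requiring the most care is the invariant that the current price vector still obeys the threshold rule (\ref{inip})---namely that $(i,j)\in K_0$ forces $p_{ij}=\underline{\pi}_{ij}$ and that $f_{ji}(-c)<0$ for every feasible $c>p_{ij}$. Cases 1 and 2 both hinge on this, so if the algorithm mutates prices during its execution, preserving this invariant under every update is really the content that makes the lemma work; the three-way case split itself is then essentially routine monotonicity.
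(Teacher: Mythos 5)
Your proof is correct and follows essentially the same route as the paper's: both rest on the initial price rule (\ref{inip}) to show that any candidate blocking price $c$ with $f_{ji}(-c)>r_j$ must satisfy $c\leq p_{ij}$ with $(i,j)$ mutually acceptable, and then invoke $K=\emptyset$ to conclude $f_{ij}(c)\leq f_{ij}(p_{ij})\leq \tilde q_i = q_i$. Your direct case analysis is in fact slightly more careful than the paper's contradiction argument, since you justify why $K=\emptyset$ forces $i$ to be matched (hence $\tilde q_i=q_i$), a step the paper asserts without explanation; you also correctly flag that the argument as given is tied to the initial-iteration form of $p$.
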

\begin{proof}
We know that a stable matching satisfy conditions (p1) and (p2). By
definition  $X\subseteq \widetilde E$ thus (p1) holds true. Suppose
that $K=\emptyset$ and on contrary suppose that (p2) dose not hold
true. This means that for some $(i,j)\in E$ there exists $c\in
[\underline\pi_{ij}, \overline \pi_{ij}]$ such that $f_{ji}(-c)>r_j$
and $f_{ij}(c)>q_i$. Initially, $r= \textbf{0}$ and
$f_{ji}(-p_{ij})\geq 0$ for $(i,j)\in E$, therefore, $p_{ij}\geq c$,
by (\ref{inip}). This means that $f_{ij}(p_{ij})\geq f_{ij}(c)>q_i$.
But $K=\emptyset$ implies that $f_{ij}(p_{ij})< \tilde q_i=q_i$,
which is a contradiction. This proves the assertion.
\end{proof}

If $K=\emptyset$ then there is no need to modify price vector $p$
and define further sets but if $K$ is not empty then we will modify
price vector, by preserving condition (p1). The new price vector
must also be feasible, that is, $\underline{\pi}_{ij}\leq \tilde
p_{ij} \leq \overline{\pi}_{ij}$ for each $(i,j)\in E$. Since we are
considering strictly increasing functions, therefore, we can find a
real number $m_{ij}^* \in \mathbb{R}^{{+}{+}}$ for each $(i, j)\in
K$, to modify price vector $p$, such that
\begin{equation}\label{m*}
f_{ji}(-(p_{ij}-m_{ij}^*)) = r_j.
\end{equation}
%Due to the nature of valuation functions.
Since we are dealing with discrete prices so we will define an
integer $m_{ij}$ as follows:
\begin{equation}\label{mij}
m_{ij}= \max \left\{1, \lceil m_{ij}^{*}\rceil \right\}.
\end{equation}
Now, we have
\begin{equation}\label{mij int}
f_{ji}(-(p_{ij}-m_{ij})) \geq r_j,
\end{equation}
where $p_{ij}- m_{ij}$ is an integer and $m_{ij}$ is the minimum
positive integer that satisfies the above condition.\\
This means that
\[
 f_{ji}(-(p_{ij}-(m_{ij}-1))) \leq r_j.
 \]

Here the integer $m_{ij}$ for each $(i,j)\in K$ helps us in finding
the new price vector such that condition (p2) also satisfies. Now we
define
%the subset $\hat{L}$ from all those pairs of
% $K$ whom demand of trading is less than lower bound
a subset $L$ of $K$ that contain those pairs from the set $K$ for
which modified price does not remain feasible.
\begin{equation}\label{tilL}
L=\{(i,j)\in K\mid p_{ij}-m_{ij}< \underline{\pi}_{ij}\}.
\end{equation}
The modified price vector $\tilde p$ must also be feasible and is
defined by:
\begin{equation}\label{til-p}
\tilde p_{ij}:=\left \{
\begin{array}{ll}
\max \{\underline {\pi}_{ij}, p_{ij}-m_{ij}\} & \mbox{if } (i,j)\in K\\
p_{ij} & \mbox{otherwise}
\end{array}\right.
\begin{array}{l}
\quad (i,j)\in E.
\end{array}
\end{equation}
We also define a subset $\widetilde T_0$ of $K$ by:
\begin{equation}\label{U2}
\widetilde T_0:=\{(i,j)\in K \mid  f_{ij}(\tilde p_{ij})< 0\}.
\end{equation}
\textbf{Remark:} Throughout in the algorithm, our modified price
vector will be decreasing and the size of matching $X$ will be
increasing. Also, the participants will change their preferences
according to new price vector.
\section{An Algorithm for Finding a Pairwise Stability}
In this section, we propose an algorithm for finding a pairwise
stable outcome for the model described in Section \ref{sec- The
Buyer-Seller Sequential Mechanism}.
\begin{description}
\item[Input:] Two disjoint and finite sets $U$ and $V$, the set of ordered pairs
$E= U\times V$, price vector $p \in \mathbb{Z}^E$, two vectors
$\underline{\pi} \in \mathbb{Z}^E$ and
$\overline{\pi}\in\mathbb{Z}^E$ where
$\underline{\pi}\leq\overline{\pi}$, general increasing functions .
\item[Output:] Vectors $(q,r)\in \mathbb{R}^{U}\times \mathbb{R}^{V}$, and  $p \in \mathbb{Z}^E $
 must satisfy $(p1)$ and $(p2)$.
%\item[Step 0:] All buyer and seller are unmatched and matching X is empty
\item[Step 0:]
Put $\widetilde V= \emptyset$ and $r = \bf{0}$. Initially define
$p$, $K_{0}$, $T_{0}$, $\widetilde E$, $\tilde q$, $\widetilde E_P$
and $\widehat{E}_P$ by (\ref{inip})$- $(\ref{hatEP}), respectively
and find a matching $X$ in the bipartite graph $(U,V;\widehat
{E}_P)$ satisfying (s1) and (s2). Define $r$, $\widetilde{V}$ and
$K$ by (\ref{payoff2}, (\ref{til-V}) and (\ref{K}), respectively.
\item[Step 1:]
If $K= \emptyset$ then define $q$ by (\ref{payoff1}) and stop.
Otherwise go to Step 2.
\item[Step 2:]
For each $(i,j)\in K$ calculate $m_{ij}$ by (\ref{mij}) and new
price vector $\tilde p$ by (\ref{til-p}). Define $L$ and $\widetilde
T_0$ by (\ref{tilL}) and (\ref{U2}), respectively and update $T_0$
by $T_0:=T_0 \cup \widetilde T_0$ and $K_0$ by $K_0:=K_0 \cup L$.
\item[Step 3:]
Replace price vector $p$ by $\tilde{p}$ and modify $\widetilde E$
by:
\begin{equation}\label{ft}
\widetilde E:=\widetilde E\setminus \{K_0 \cup T_0\}.
\end{equation}

Again define $\tilde q$ by (\ref{til-q}) and modify $\widetilde
E_P$, $\widehat E_P$ by (\ref{til-EP}) and (\ref{hatEP})
respectively, for the updated $p$ and $\widetilde E$. Find a
matching $X$ in the bipartite graph $(U, V; \widehat E_P)$ that
satisfies the conditions (s1) and (s2). Again define $r$,
$\widetilde{V}$ and $K$ by (\ref{payoff2}), (\ref{til-V}) and
(\ref{K}), respectively. Go to Step 1.
\end{description}
\section{Existence of Pairwise Stability} In this section, we
will show the existence of pairwise stability for this model. For
this purpose, we will show that the algorithm we have proposed
terminates and at termination it outputs a stable matching. We will
also give some other important results about the model and the
algorithm.

We will add prefixes $(old)*$ and $(new)*$ to sets/vectors/integers
before and after update, respectively, in any iteration of the
algorithm. The key result is Lemma \ref{max-intg} which will be
proved here using the assumption defined in equation (\ref{mij}).
%The proof of Lemma \ref{re-la02} is the direct consequence of Lemma
%\ref{max-intg} that can be proof by following the same steps of Ali
%and Farooq \cite{ya2010} Lemma 3.4 but uses general increasing
%function instead of linear function. The remaining results follows
%from it. The proof of Theorem \ref{st} and Theorem
%\label{termination} can be seen in the paper of Ali and Farooq
%\cite{ya2010} in the original version. So we will not prove them
%here.
%
\begin{lemma}\label{existence}
There exists a matching $X$ in the bipartite graph $(U, V; \widehat
E_P)$ that satisfy condition (s1) and (s2) in  each iteration of the
algorithm at Step 3.
\end{lemma}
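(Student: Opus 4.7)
The plan is to prove Lemma~\ref{existence} by induction on the iteration number. At Step 0, $\widetilde V = \emptyset$, so (s1) is vacuous, and since all edges in $\widehat E_P$ satisfy $f_{ji}(-p_{ij})\geq r_j = 0$, a maximum-weight matching in the finite bipartite graph $(U, V; \widehat E_P)$ with edge weights $f_{ji}(-p_{ij})$ exists and automatically satisfies (s2). For the inductive step, my strategy is to show that the old matching $X^{\rm old}$ from the previous iteration is itself contained in the updated edge set $\widehat E_P^{\rm new}$. Once this containment holds, $X^{\rm old}$ matches every buyer of $\widetilde V^{\rm old}$ within $(U, V; \widehat E_P^{\rm new})$, so the family of matchings in $\widehat E_P^{\rm new}$ satisfying (s1) is non-empty; by finiteness, some element of this family maximizes $\sum_{(i,j)\in X} f_{ji}(-p_{ij})$ and therefore satisfies both (s1) and (s2).

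The key observation driving $X^{\rm old} \subseteq \widehat E_P^{\rm new}$ is that by definition~(\ref{K}) the set $K$ contains only pairs $(i,j)$ with $i$ unmatched. Hence in Step 2, for every $(i,j) \in X^{\rm old}$ the seller $i$ is matched, so no price $p_{ij'}$ with $j'\in V$ is modified: we get $p_{ij}^{\rm new}=p_{ij}^{\rm old}$, and consequently $f_{ij}(p_{ij}^{\rm new}) = f_{ij}(p_{ij}^{\rm old}) = \tilde q_i^{\rm old}$ and $f_{ji}(-p_{ij}^{\rm new}) = f_{ji}(-p_{ij}^{\rm old}) = r_j^{\rm old}$ by (\ref{payoff1})--(\ref{payoff2}). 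Next, since the augmentations $K_0 \cup L$ and $T_0 \cup \widetilde T_0$ only add pairs from $K$, the pair $(i,j)$ survives the update~(\ref{ft}); thus $(i,j) \in \widetilde E^{\rm new} \subseteq \widetilde E^{\rm old}$. This inclusion yields $\tilde q_i^{\rm new} \leq \tilde q_i^{\rm old}$, while $(i,j)$ attaining the value $\tilde q_i^{\rm old}$ in the new evaluation forces the reverse inequality, hence $(i,j) \in \widetilde E_P^{\rm new}$. Combined with $f_{ji}(-p_{ij}^{\rm new}) = r_j^{\rm old}$, this gives $(i,j) \in \widehat E_P^{\rm new}$ as required.

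The main obstacle is bookkeeping: we must track the nested quantities $p$, $\widetilde E$, $\tilde q$, $\widetilde E_P$, and $\widehat E_P$ through the update and verify that none of the successive filters discards an edge of $X^{\rm old}$. The structural fact that collapses the difficulty is the elementary observation that $K$ only touches edges incident to unmatched sellers, so matched sellers' valuations are frozen between iterations; once this is isolated, the chain of inclusions follows by mechanically unwinding the definitions, and the existence assertion reduces to the trivial maximization of a bounded objective over a non-empty finite set.
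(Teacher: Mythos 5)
Your proof is correct and follows essentially the same route as the paper's: both reduce the claim to showing $(old)X\subseteq (new)\widehat E_P$, using the fact that the updates at Steps 2 and 3 only modify prices and delete pairs belonging to $K$, which by (\ref{K}) contains no edge incident to a matched seller. Your write-up is considerably more explicit than the paper's terse three-line argument (in particular the chain $(i,j)\in\widetilde E^{\rm new}$, $\tilde q_i^{\rm new}=\tilde q_i^{\rm old}$, $(i,j)\in\widetilde E_P^{\rm new}$, $(i,j)\in\widehat E_P^{\rm new}$, and the reduction of existence to maximizing over a non-empty finite family), but the underlying idea is identical.
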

\begin{proof}
The proof of the lemma is equivalent to show that $(old)X\subseteq
(new)\widehat E_P$ at Step 3, in each iteration. In each iteration
at Step 2 and at Step 3, we update vector $p$ and $\widetilde E$ by
$(\ref{til-p})$ and (\ref{ft}), respectively. As clear from
(\ref{til-p}) and (\ref{ft}), these modifications are done for
elements or/and subsets of $K$. As $K\cap(old)X=\emptyset$,
therefore, $(old)X\subseteq (new)\widehat E_P$.
\end{proof}

The following lemma represents the significance of $m_{ij}$ for each
$(i,j)\in K$ and explains that updated price is the maximum price at
which $(i,j)\in K$ can match.
\begin{lemma}\label{max-intg}
In each iteration of the algorithm at Step $3$, we have
$f_{ji}(-(p_{ij}-m_{ij}))\geq r_j$ for each $(i,j)\in K$.
Furthermore, if $f_{ji}(-(p_{ij}-m_{ij}))> r_j$ for some $(i,j)\in
K$ then $p_{ij}- m_{ij}$ is the maximum integer for which this
inequality holds.
\end{lemma}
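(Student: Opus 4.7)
The plan is to read both conclusions directly off the defining equation (\ref{m*}) and the ceiling-based definition (\ref{mij}), using strict monotonicity of $f_{ji}$. The two facts I intend to combine are: (i) $f_{ji}(-(p_{ij}-m_{ij}^{*})) = r_j$ from (\ref{m*}), and (ii) the two-sided bracket $m_{ij}-1 < m_{ij}^{*} \le m_{ij}$ coming from (\ref{mij}). For (ii) I would first note that $m_{ij}^{*} \in \mathbb{R}^{++}$ implies $\lceil m_{ij}^{*}\rceil \ge 1$, so the $\max$ with $1$ in (\ref{mij}) is inactive and $m_{ij} = \lceil m_{ij}^{*}\rceil$; the standard ceiling inequalities then give the bracket.

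For the first assertion, I would use the right half of (ii): from $m_{ij}\ge m_{ij}^{*}$ one has $-(p_{ij}-m_{ij}) \ge -(p_{ij}-m_{ij}^{*})$, and (non-strict) monotonicity of $f_{ji}$ together with (i) yields $f_{ji}(-(p_{ij}-m_{ij})) \ge r_j$. That is exactly the first sentence of the lemma.

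For the maximality claim, assume $f_{ji}(-(p_{ij}-m_{ij})) > r_j$ and let $c$ be any integer with $c > p_{ij}-m_{ij}$, i.e.\ $c \ge p_{ij}-(m_{ij}-1)$. The left half of (ii), namely $m_{ij}-1 < m_{ij}^{*}$, then forces $c > p_{ij}-m_{ij}^{*}$, so $-c < -(p_{ij}-m_{ij}^{*})$. Strict monotonicity of $f_{ji}$ combined with (i) gives $f_{ji}(-c) < r_j$, contradicting $f_{ji}(-c) > r_j$. Hence no integer exceeding $p_{ij}-m_{ij}$ witnesses the strict inequality, which is precisely the maximality statement.

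There is no real obstacle in this argument; it is essentially two applications of strict monotonicity wrapped around the ceiling bracket. The only care-point worth flagging is the verification that the $\max\{1,\cdot\}$ in (\ref{mij}) is redundant under the hypothesis $m_{ij}^{*}>0$, since it is this redundancy that supplies the tight bound $m_{ij}-1 < m_{ij}^{*}$ needed in the maximality half of the proof.
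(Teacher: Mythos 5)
Your proof is correct and follows essentially the same route as the paper's: both read the first inequality off $m_{ij}\geq m_{ij}^{*}$ together with the defining equation (\ref{m*}), and both derive maximality from the ceiling bound and strict monotonicity of $f_{ji}$. Your explicit two-sided bracket $m_{ij}-1<m_{ij}^{*}\leq m_{ij}$ merely states more carefully what the paper's case analysis around (\ref{mij}) uses implicitly.
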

\begin{proof}
Let $(i,j)\in K$ this means that $f_{ji}(-(old)p_{ij}) \leq r_{j}$.
At Step 2 we calculated an integer $m_{ij}$ by (\ref{mij}) for each
$(i,j) \in K$ with following property
\[
f_{ji}(-((old)p_{ij}-m_{ij})) \geq r_j.
\]
This proves the first part of the assertion.

Next, we prove the second part of the lemma that if
$f_{ji}(-((old)p_{ij}-m_{ij}))> r_j$ then $(old)p_{ij} - m_{ij}$ is
the maximum integer for which this holds. This can be proven by
showing that $m_{ij}$ is minimum positive integer for which
$f_{ji}(-((old)p_{ij}-m_{ij}))> r_j$ holds.

By (\ref{mij}), we have $m_{ij} \geq 1$. First we consider the case
when $m^{*}_{ij} \leq 1$, that is, $m_{ij} = 1$ by (\ref{mij}). %By
%(\ref{K}, we have
%\[
%f_{ji}(-(old)p_{ij}) \leq r_{j}
%\]
%but
%\[
%r_{j}< f_{ji}(-(old)p_{ij}-1).
%\]
For this case the result holds trivially as $m_{ij} = 1$ is minimum
positive integer. Now, consider when $m_{ij}^{*}
> 1$. We  $m_{ij}^{*}$ is a real number for which we have
\[
r_{j} = f_{ji}(-(old)p_{ij} + m_{ij}^{*})
\]
As we are dealing with strictly increasing function, therefore, for
any real number $\delta > 0$, we have
\begin{equation}\label{li}
f_{ji}(-((old)p_{ij} - (m_{ij}^{*} + \delta)))
> r_j
> f_{ji}(-((old)p_{ij} - (m_{ij}^{*} - \delta))).
\end{equation}
Since
\begin{equation}\label{ht}
m_{ij} = \lceil m_{ij}^{*}\rceil \geq m_{ij}^{*}.
\end{equation}
By (\ref{li}), $m^*_{ij}$ is minimum positive real number for which
$r_{j} < f_{ji}(-(old)p_{ij} + m_{ij}^{*})$ and by (\ref{ht}),
$m_{ij}$ is minimum positive integer for which
$f_{ji}(-(p_{ij}-m_{ij}))> r_j$. Thus $(old)p_{ij} - m_{ij}$ is the
maximum integer for which $f_{ji}(-(p_{ij}-m_{ij}))> r_j$ holds.
\end{proof}

For $(i,j)\in K$, we update price vector by (\ref{til-p}). There is
a possibility that $(new)p_{ij}$ does not remain feasible, that is,
$p_{ij}<\underline\pi_{ij}$. To maintain the feasibility in such
cases we have the following result.
\begin{lemma}\label{re-la02}
For each $(i,j)\in L$ we have $(new) p_{ij} =\underline \pi_{ij}$
and $f_{ji}(-(new) p_{ij}) \leq (old)r_j$ , where $L$ is defined at
Step 2.
\end{lemma}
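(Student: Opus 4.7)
The plan is to split the statement into the two claimed conclusions and handle each by unpacking the definitions given in Step 2.

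First I would establish $(new)p_{ij} = \underline{\pi}_{ij}$ directly. By definition of $L$ in (\ref{tilL}), any $(i,j)\in L$ satisfies $(old)p_{ij} - m_{ij} < \underline{\pi}_{ij}$. Since $L\subseteq K$, the update rule (\ref{til-p}) gives $(new)p_{ij} = \max\{\underline{\pi}_{ij},\, (old)p_{ij}-m_{ij}\}$, and the strict inequality forces this maximum to equal $\underline{\pi}_{ij}$. This part is routine and requires no new ideas.

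For the second claim $f_{ji}(-(new)p_{ij})\leq (old)r_j$, I would leverage the minimality of $m_{ij}$ recorded at the end of the discussion preceding (\ref{tilL}), namely $f_{ji}(-((old)p_{ij}-(m_{ij}-1)))\leq (old)r_j$. The key observation is an integer-rounding step: since $(old)p_{ij}-m_{ij}$ and $\underline{\pi}_{ij}$ are integers and $(old)p_{ij}-m_{ij} < \underline{\pi}_{ij}$, we must actually have $(old)p_{ij}-m_{ij}+1 \leq \underline{\pi}_{ij}$, i.e.\ $(old)p_{ij}-(m_{ij}-1)\leq \underline{\pi}_{ij}$. Applying strict monotonicity of $f_{ji}$ to the negated (hence reversed) inequality yields
\[
f_{ji}(-\underline{\pi}_{ij}) \;\leq\; f_{ji}(-((old)p_{ij}-(m_{ij}-1))) \;\leq\; (old)r_j,
\]
and since $(new)p_{ij}=\underline{\pi}_{ij}$ by the first part, this is the desired conclusion.

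The only subtlety, and the one place where care is needed, is precisely the passage from the strict inequality $(old)p_{ij}-m_{ij}<\underline{\pi}_{ij}$ to the weak inequality $(old)p_{ij}-(m_{ij}-1)\leq\underline{\pi}_{ij}$: this relies on integrality of both $p$ and $\underline{\pi}$, without which the minimality bound from (\ref{mij}) would not propagate to $\underline{\pi}_{ij}$. Everything else is a direct unfolding of definitions together with the strictly increasing property of $f_{ji}$, so I do not anticipate any significant obstacle beyond keeping the direction of the inequality straight after negation.
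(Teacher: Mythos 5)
Your proposal is correct and follows the route the paper intends: the paper offers only the one-line remark that the lemma ``follows by using (\ref{tilL}) and (\ref{til-p})'', and your argument is exactly the unfolding of those definitions, supplemented by the minimality inequality $f_{ji}(-(p_{ij}-(m_{ij}-1)))\leq r_j$ stated just after (\ref{mij int}). The integrality step you flag (passing from $(old)p_{ij}-m_{ij}<\underline{\pi}_{ij}$ to $(old)p_{ij}-(m_{ij}-1)\leq\underline{\pi}_{ij}$) is indeed the one nontrivial point, and your treatment of it is what the paper's terse justification leaves implicit.
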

The proof of the Lemma~\ref{re-la02} follows by using (\ref{tilL})
and (\ref{til-p}).
%
%Following results mainly depends on different sets that are defined
%in this model, for example $p$, $K_{0}$, $T_{0}$, $\widetilde E$,
%$\tilde q$, $\widetilde E_P$ and $\widehat{E}_P$ by (\ref{inip}$-
%$(\ref{hatEP} are similar to that of Ali and Farooq \cite{ya2010}.
%The proofs of these may be seen in Ali and Farooq \cite{ya2010}.

The following lemma describes the important features of our
algorithm. The results of these lemma will be used to show that the
algorithm will terminate after finite number of iterations.
\begin{lemma}\label{la2}
In each iteration of the algorithm, following hold:
\begin{description}
\item[(i)]
If $L\neq\emptyset$ or $\widetilde T_0\neq\emptyset$ at Step 2 then
$\widetilde E$ reduces at Step 3. Otherwise $\widetilde E$ will
remain the same.
\item[(ii)]
The vector $p$ decreases or remains same. In particular, if
$K\setminus\{L\cup \widetilde T_0\}\not=\emptyset$ at Step 2 then
$p_{ij}$ decreases at Step 3 for all $(i,j)\in K\setminus \{L\cup
\widetilde T_0\}$.
\item[(iii)]
The vector $r$ increases or remains  same.
\end{description}
\end{lemma}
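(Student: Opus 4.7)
The plan is to address the three parts separately, with (i) and (ii) being direct bookkeeping from the update rules in Steps~2 and 3, while (iii) will rely on Lemma~\ref{existence} together with the definition (\ref{hatEP}) of $\widehat E_P$.

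For part (i), the only modification of $\widetilde E$ happens at Step~3 through (\ref{ft}), applied with the sets $K_0 := (old)K_0 \cup L$ and $T_0 := (old)T_0 \cup \widetilde T_0$ as updated in Step~2. Since $L \subseteq K$ and $\widetilde T_0 \subseteq K$, and since $K \subseteq \widetilde E_P \subseteq \widetilde E$ by (\ref{K}), every element of $L \cup \widetilde T_0$ genuinely belongs to $\widetilde E$ and is stripped out by (\ref{ft}). Hence $(new)\widetilde E$ is strictly smaller exactly when $L \cup \widetilde T_0 \neq \emptyset$, and equals $(old)\widetilde E$ otherwise.

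For part (ii), the update rule (\ref{til-p}) gives $\tilde p_{ij} = p_{ij}$ for $(i,j) \notin K$ and $\tilde p_{ij} = \max\{\underline \pi_{ij}, p_{ij} - m_{ij}\}$ for $(i,j) \in K$. Since $m_{ij} \geq 1$ by (\ref{mij}), we have $p_{ij} - m_{ij} < p_{ij}$, so $\tilde p_{ij} \leq p_{ij}$ in every case. For $(i,j) \in K \setminus \{L \cup \widetilde T_0\}$ we have in particular $(i,j) \notin L$, so (\ref{tilL}) forces $p_{ij} - m_{ij} \geq \underline \pi_{ij}$ and thus $\tilde p_{ij} = p_{ij} - m_{ij} < p_{ij}$ strictly, establishing the ``in particular'' clause.

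Part (iii) is the main obstacle; it is the part that actually uses the content of Lemma~\ref{existence}. I will verify $(new) r_j \geq (old) r_j$ for each $j \in V$ by splitting on whether $j$ was matched in $(old)X$. If $j \notin (old)\widetilde V$ then $(old) r_j = 0$; if $j$ is matched in $(new)X$ to some $i$ then $(i,j) \in (new)\widehat E_P \subseteq (new)\widetilde E$ yields $f_{ji}(-(new)p_{ij}) \geq 0$ from the mutual-acceptability invariant of $\widetilde E$, and otherwise $(new) r_j = 0$. If $j \in (old)\widetilde V$, condition (s1) applied to the new matching keeps $j$ matched, say to $i_1$; then $(i_1, j) \in (new)\widehat E_P$, and applying the defining inequality (\ref{hatEP}) with the $r$ still in force when $\widehat E_P$ is recomputed (namely $(old)r$) yields $f_{ji_1}(-(new)p_{i_1 j}) \geq (old) r_j$, so $(new) r_j \geq (old) r_j$. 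The existence of such a matching satisfying (s1) in $(U, V; (new)\widehat E_P)$---so that previously matched buyers survive and this inequality can be extracted---is exactly the content of Lemma~\ref{existence}.
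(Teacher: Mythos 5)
Your proposal is correct, and parts (i) and (ii) follow the paper's own bookkeeping essentially verbatim (indeed your (i) is slightly more careful, since you check that the removed pairs of $L\cup\widetilde T_0$ actually lie in $\widetilde E$ via $L,\widetilde T_0\subseteq K\subseteq\widetilde E_P\subseteq\widetilde E$, which is what makes the reduction strict, and your (ii) explicitly uses $(i,j)\notin L$ together with (\ref{tilL}) to get the strict decrease).

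Part (iii) is where you genuinely diverge. The paper derives $(new)r_j\geq(old)r_j$ for $j\in(old)\widetilde V$ from the monotonicity of $p$ (part (ii)) combined with an appeal to condition (s2); this is somewhat loose, because (s2) only asserts that the \emph{sum} $\sum_{(i,j)\in X}f_{ji}(-p_{ij})$ is maximized, and a sum-maximizing matching could in principle reassign $j$ to a partner giving $j$ less, while price monotonicity only helps when $j$ keeps the same partner. You instead extract the inequality edge-by-edge from the definition (\ref{hatEP}) of $\widehat E_P$: since $\widehat E_P$ is recomputed at Step 3 while the old $r$ is still in force, every edge $(i_1,j)$ available to a rematched buyer $j$ already satisfies $f_{ji_1}(-(new)p_{i_1j})\geq(old)r_j$, and (s1) (guaranteed non-vacuous by Lemma~\ref{existence}) ensures $j$ does get matched along some such edge. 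This is the cleaner and more robust justification, and it also handles partner changes uniformly; the only mild hand-wave on your side is the ``mutual-acceptability invariant'' used for newly matched buyers, which you could avoid entirely by noting that (\ref{hatEP}) together with $r\geq\mathbf{0}$ already gives $f_{ji}(-(new)p_{ij})\geq 0$ on every edge of $\widehat E_P$.
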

\begin{proof}
\begin{description}
\item[(i)]At Step 0, $\widetilde E$ is given by (\ref{tile}) and it
is updated by (\ref{ft}), at Step 3. At Step 2 we updated
$K_0=K_0\cup L$ and $T_0= T_0\cup \widetilde T_0$. According to
(\ref{ft}), $\widetilde E$ will reduce if $L\neq\emptyset$ or
$\widetilde T_0\neq\emptyset$ at Step 2. If both $L$ and $\widetilde
T_0$ are empty the $\widetilde E$ will remain unchanged by
(\ref{ft}).
\item[(ii)] Initially, $p$ is set by (\ref{inip}) and in each
iteration it is updated by (\ref{til-p}). It is easy to see that for
$(i,j)\in K$, $\tilde p_{ij}\leq p_{ij}$. Here the equality may hold
for $(i,j)\in L\cup \widetilde T_0$.
\item[(iii)] At the start of the algorithm we set $r=\bf 0$. We
modified $r$ by (\ref{payoff2}) afterwards. In each iteration,
matching $X$ satisfies condition $(s1)$ this means that
$(old)\widetilde V\subseteq(new)\widetilde V$. Also,
$(new)p\leq(old)p$ by part (ii) of Lemma \ref{la2}. Thus
$(new)r_j=f_{ji}(-(new)p_{ij})\geq (old)r_j$, for $j\in
(old)\widetilde V$, as matching $X$ also satisfies $(s2)$. Moreover,
$(old)r_j=(new)r_j=0$ for each $j\in V\setminus \widetilde V$.
Therefore, vector $r$ either remains the same or increases.
\end{description}
\end{proof}

To show that our algorithm produces a stable matching is not
possible without proving that our algorithm will terminate after
some iterations.
\begin{theorem}\label{termination}
The algorithm terminates after finite number of iterations.
\end{theorem}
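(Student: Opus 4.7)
The plan is to exhibit a nonnegative integer potential that strictly decreases at every non-terminating iteration; since $\widetilde{E}$ is finite and prices are bounded integer-valued, such a monovariant immediately forces the algorithm to halt. Lemma~\ref{la2} already hands us two monotone quantities, so the main task is simply to combine them.

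Concretely, I would define $\Phi := |\widetilde{E}| + \sum_{(i,j)\in E}(p_{ij}-\underline{\pi}_{ij})$. Both summands are nonnegative integers throughout the run: the first because $\widetilde{E}\subseteq E$ with $|E|=|U|\cdot|V|$; the second because prices are integer and feasible, so $p_{ij}\geq\underline{\pi}_{ij}$. By Lemma~\ref{la2}(i), $|\widetilde{E}|$ is non-increasing across iterations, and by Lemma~\ref{la2}(ii), each coordinate $p_{ij}$ is non-increasing, so the price sum is non-increasing as well. In particular $\Phi$ can never grow.

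To finish the argument I would case-split at any iteration where the algorithm fails to terminate, that is, where Step~2 is reached with $K\neq\emptyset$. If $L\cup\widetilde{T}_0\neq\emptyset$, then part~(i) of Lemma~\ref{la2} gives a strict decrease of $|\widetilde{E}|$. Otherwise $L\cup\widetilde{T}_0=\emptyset$, so $K\setminus\{L\cup\widetilde{T}_0\}=K\neq\emptyset$, and part~(ii) forces $p_{ij}$ to drop strictly for at least one pair, hence the price sum strictly decreases. In either case $\Phi$ drops by at least $1$, so the number of iterations is bounded by the initial value of $\Phi$, which is at most $|E|+\sum_{(i,j)\in E}(\overline{\pi}_{ij}-\underline{\pi}_{ij})$.

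The only subtlety I anticipate is ruling out adverse interaction between the two monovariants in the ``mixed'' case, where $L\cup\widetilde{T}_0\neq\emptyset$ and some price simultaneously decreases: because the price sum is taken over all of $E$ rather than over $\widetilde{E}$, shrinking $\widetilde{E}$ leaves the price sum unaffected, so the two effects can only reinforce each other. With that observation the termination argument is essentially bookkeeping on top of Lemma~\ref{la2}.
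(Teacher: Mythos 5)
Your proof is correct and follows essentially the same route as the paper: both arguments rest on Lemma~\ref{la2} parts (i) and (ii), with the same case split on whether $L\cup\widetilde{T}_0$ is empty, using that $\widetilde{E}$ can shrink at most $|E|$ times and that the bounded integer price vector can strictly decrease only finitely often. Your packaging of the two monovariants into the single potential $\Phi$ is a minor (and welcome) refinement that additionally yields the explicit iteration bound $|E|+\sum_{(i,j)\in E}(\overline{\pi}_{ij}-\underline{\pi}_{ij})$, but the underlying argument is the paper's.
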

\begin{proof}
Termination of the algorithm depends upon set of mutually acceptable
pairs and price vector $p$. By the Lemma \ref{la2} part (i),
$\widetilde E$ reduces when either $L\neq\emptyset$ or
$T_0\neq\emptyset$ or remains the same. This case is possible at
most $|E|$ times.

If $L=T_0=\emptyset$ then, by part (ii) of Lemma \ref{la2}, $p_{ij}$
decreases for each $(i,j)\in K$. Otherwise, $p$ remains unchanged.
As we know that $p$ is bounded and discrete, therefore, it can be
decreased a finite number of time. This proves that in either case
our algorithm terminates after a finite number of iterations.
\end{proof}

This is the most important result which establishes the existence of
pairwise stability for our model.
\begin{theorem}\label{st}
The outcome $(X; p, q, r)$ must satisfies the condition (p1) and
(p2) if algorithm terminates.
\end{theorem}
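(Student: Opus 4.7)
At termination $K=\emptyset$, and the plan is to check (p1) and (p2) in this state. Condition (p1) is immediate: every $(i,j)\in X$ lies in $\widehat E_P\subseteq\widetilde E=E\setminus(K_0\cup T_0)$, so $(i,j)$ was never added to either exclusion set and hence $f_{ij}(p_{ij})\geq 0$ and $f_{ji}(-p_{ij})\geq 0$ at termination; unmatched agents have payoff $0$ by convention, so $q,r\geq\mathbf{0}$.

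For (p2), fix $(i,j)\in E$ and $c\in[\underline{\pi}_{ij},\overline{\pi}_{ij}]_{\ZZ}$ and split into three cases according to where $(i,j)$ sits at termination: $K_0$, $T_0\setminus K_0$, or $\widetilde E$. If $(i,j)\in K_0$, then either $(i,j)$ was in the initial $K_0$ (whence (\ref{inip}) forces $p_{ij}=\underline{\pi}_{ij}$ with $f_{ji}(-p_{ij})<0\leq r_j$) or it entered via $L$ (whence Lemma \ref{re-la02} together with Lemma \ref{la2}(iii) gives $p_{ij}=\underline{\pi}_{ij}$ and $f_{ji}(-p_{ij})\leq r_j$); in both subcases $c\geq p_{ij}$ and strict monotonicity of $f_{ji}$ yield $f_{ji}(-c)\leq r_j$. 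If $(i,j)\in T_0\setminus K_0$, then $f_{ij}(p_{ij})<0$, so $c\leq p_{ij}$ is settled by $f_{ij}(c)<0\leq q_i$; for $c>p_{ij}$ I would apply Lemma \ref{max-intg} at the iteration in which $(i,j)$ entered $\widetilde T_0$ (or use (\ref{inip}) if $(i,j)$ was in the initial $T_0$) to conclude that terminal $p_{ij}$ is the largest integer with $f_{ji}(-p_{ij})\geq (old)r_j$, whence any integer $c>p_{ij}$ gives $f_{ji}(-c)<(old)r_j\leq r_j$ by Lemma \ref{la2}(iii).

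The case $(i,j)\in\widetilde E$ at termination is where the stopping rule $K=\emptyset$ does real work. I would first observe that seller $i$ must be matched in $X$: since $(i,j)\in\widetilde E$, the maximum $\tilde q_i$ is attained and hence some $(i,j^{*})\in\widetilde E_P$ exists, and if $i$ were unmatched then $(i,j^{*})\in K$ would contradict $K=\emptyset$. Picking $(i,j_0)\in X$, $q_i=f_{ij_0}(p_{ij_0})=\tilde q_i\geq f_{ij}(p_{ij})$ by the definition of $\tilde q_i$, which covers $c\leq p_{ij}$. For $c>p_{ij}$ I would repeat the Lemma \ref{max-intg} argument at the last iteration in which $p_{ij}$ was modified, falling back to (\ref{inip}) if $p_{ij}$ was never updated.

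The principal obstacle is the $c>p_{ij}$ subcase: it is precisely where the integer-tight statement of Lemma \ref{max-intg} is needed, namely that a further unit decrease of $p_{ij}$ would already violate $f_{ji}(-p_{ij})\geq (old)r_j$, combined with Lemma \ref{la2}(iii) to propagate the bound from $(old)r_j$ to the terminal $r_j$. Once this combination is in hand, the three cases above cover all of $E\times[\underline{\pi}_{ij},\overline{\pi}_{ij}]_{\ZZ}$, so (p2) follows.
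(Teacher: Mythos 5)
Your proof is correct and rests on exactly the two ingredients the paper uses: the integer-tightness of the price decrement (Lemma \ref{max-intg}, propagated to the terminal $r_j$ via Lemma \ref{la2}(iii)) to dispose of prices $c>p_{ij}$, and the stopping condition $K=\emptyset$ together with the definition of $\tilde q_i$ to dispose of $c\leq p_{ij}$. The organization differs, though: the paper argues by contradiction, picking a blocking pair $(i,j)$ with price $\alpha$ and splitting only on $p_{ij}<\alpha$ versus $p_{ij}\geq\alpha$, and in the second case simply asserts $f_{ij}(p_{ij})<\tilde q_i=q_i$ "since $(i,j)$ are not matched" --- an assertion that silently ignores pairs with $(i,j)\in X$ or $i$ matched elsewhere, and, more seriously, pairs that have been expelled from $\widetilde E$ into $K_0$ or $T_0$, for which $f_{ij}(p_{ij})$ no longer enters the maximum defining $\tilde q_i$. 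Your direct verification over the terminal partition $K_0$, $T_0\setminus K_0$, $\widetilde E$ (using (\ref{inip}) and Lemma \ref{re-la02} for the expelled pairs, and the matched-seller argument for $\widetilde E$) covers precisely those cases the paper glosses over, and your explicit remark that Lemma \ref{max-intg} must be invoked at the iteration of the last price update and then carried forward by the monotonicity of $r$ is a gap-filling step the paper leaves implicit. So the route is the same in substance but your write-up is the more complete of the two.
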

\begin{proof}
We know that $X\subseteq \widetilde E$. Initially $\widetilde E$ is
defined by (\ref{tile}) and afterwards it is updated by (\ref{ft})
at Step 3 in each iteration. Thus $f_{ij}(p_{ij})$ and
$f_{ji}(-p_{ij})$ are non-negative for all $(i,j)\in \widetilde E$.
Therefore, $f_{ij}(p_{ij})\geq 0$ and $f_{ji}(-p_{ij})\geq 0$ for
all $(i,j)\in X$. This shows that the $X$ satisfies (p1) at
termination.

On contrary to (p2), assume that there exist $\alpha \in
[\underline\pi_{ij},\overline\pi_{ij}]$ and $(i,j)\in E$ such that
\[
f_{ij}(\alpha)>q_i \qquad \mbox{ and } \qquad f_{ji}(-\alpha)>r_j.
\]
If we take $p_{ij}<\alpha$ it yields
$f_{ji}(-p_{ij})>f_{ji}(-\alpha)>r_j$. But according to Lemma
\ref{max-intg}, $p_{ij}$ is the maximum integer for which this
inequality holds. Thus $p_{ij}<\alpha$ is not possible. Now consider
that $p_{ij}\geq \alpha$, which implies that
\begin{equation}\label{contra-eq}
f_{ij}(p_{ij})\geq f_{ij}(\alpha)> q_i.
\end{equation}
However, at termination we have $K=\emptyset$ means that $(i,j)\not
\in K$ and since $(i,j)$ are not matched, therefore,
$f_{ij}(p_{ij})<\tilde q_i=q_i$. A contradiction to
(\ref{contra-eq}). Thus (p2) holds when the algorithm terminates.
\end{proof}
\section{Conclusion}\label{sec-concl}
This paper presents a matching model where money is given in
integers. The preferences of participants are represented by general
increasing utility functions. Ali and Farooq \cite{ya2010} is a
special case of our model. We have given a constructive proof for
the existence of a pairwise stable outcome in our model. As a future
work it is important to consider problems concerning the structures
of pairwise stable outcomes in our model. It is well-known that
stable matchings forms a lattice. A similar approach can be found in
article \cite{Far08A} by Farooq \emph{et al.} It would be worthwhile
to prove the existence of stable outcome for many-to-many model with
such valuation functions by using the same mathematical apparatus.
Further, the complexity of our algorithm may depend on the length of
$[\underline \pi, \overline \pi]$. An interesting problem may be to
devise an algorithm with polynomial complexity in the number of
participants.
%%%%%%%%%%%%%%%%%%%%%%%%%%%%%%%%%%%%%%

%\bibliographystyle{plain}
%\bibliography{sample}
\end{document}